\documentclass[journal,onecolumn]{IEEEtran}

\usepackage{amsmath,amsfonts,amssymb,amsthm,mathtools}
\usepackage{array,booktabs,graphicx}
\usepackage{enumitem}
\usepackage{hyperref}
\hypersetup{
  hidelinks,
  pdftitle={Three-Block Linear Programming Bounds for LRCs},
  pdfauthor={Ming-Hsuan Kang and Maosheng Xiong},
  pdfkeywords={locally recoverable codes, linear programming bounds, partition distance distributions, Krawtchouk polynomials, Delsarte method}
}

\theoremstyle{plain}
\newtheorem{theorem}{Theorem}[section]
\newtheorem{proposition}[theorem]{Proposition}
\newtheorem{lemma}[theorem]{Lemma}
\newtheorem{corollary}[theorem]{Corollary}
\theoremstyle{definition}
\newtheorem{definition}[theorem]{Definition}
\theoremstyle{remark}
\newtheorem{remark}[theorem]{Remark}

\newcommand{\C}{\mathcal C}
\newcommand{\F}{\mathbb F}
\newcommand{\Kraw}{K}
\newcommand{\1}{\mathbf 1}
\newcommand{\wt}{\operatorname{wt}}
\newcommand{\certfrac}[3]{\(#1/#2\,\text{ (#3)}\)}
\numberwithin{table}{section}

\begin{document}

\title{Linear Programming Bounds for Locally Recovery Codes II}

\author{Ming-Hsuan~Kang and Maosheng~Xiong
\thanks{The work of Ming-Hsuan Kang was supported by the National Science
and Technology Council, Taiwan, under Grant NSTC 115-2115-M-A49-010.
}}

\maketitle

\begin{abstract}
We give a polynomial-size linear programming bound for $q$-ary
all-symbol locally recoverable codes with locality parameters $(r,\delta)$,
without assuming linearity.  The key idea is to keep, for every ordered
pair of codewords and every selected recovery view, the joint Hamming
weight on the helper set, the recovered coordinate, and the rest of the
code -- rather than collapsing this triple into a single distance, as
earlier formulations do.  Averaging this three-block distribution over
recovery views of the same length yields exact identities linking it to
the global distance distribution, together with nonnegative
product-Krawtchouk constraints that encode locality and spectral
positivity simultaneously.  The resulting LP has polynomially many
variables, its optimum dominates the ordinary Delsarte bound, and an
earlier outside-distance formulation, the convex-hull bound of
Li--Wei--Xiong, and the dual-based bound of Gruica--Jany--Ravagnani all
arise from it as coarser marginals.  Exact rational certificates over
$q=2,3,4$ show the bound is strictly stronger than the best of these prior
LPs in thirteen of fifteen tested cases, pinning down seven exact maximum
code sizes and twelve exact maximum linear dimensions.
\end{abstract}

\begin{IEEEkeywords}
Locally recoverable codes, linear programming bounds, partition distance
distributions, Krawtchouk polynomials, Delsarte method.
\end{IEEEkeywords}

\section{Introduction}
\label{sec:introduction}

Locally recoverable codes (LRCs) allow each erased coordinate to be
reconstructed from a short local view.  They were introduced for distributed
storage, where locality reduces the number of surviving symbols accessed
during repair \cite{gopalan2012,prakash2012}.  Finite-length bounds have
used shortening, recovery graphs, entropy, generalized weights, matroids,
and dual-code structure
\cite{cadambe2015,tamobargfrolov2016,huang2015,kruglik2017}.

Delsarte's linear program bounds an ordinary code through its Hamming
distance distribution and the nonnegativity of its Krawtchouk transform
\cite{delsarte1973}.  To make this method sensitive to locality, fix a
length-$n$ code $\C\subseteq\F_q^n$ with coordinate set $[n]=\{1,\ldots,n\}$,
a recovered coordinate $i\in[n]$, and a selected helper set $S_i$, and write
\[
 T_i=\{i\}\mathbin{\dot\cup}S_i.
\]
The recovery view determines the three-part coordinate partition
\[
 [n]
 =S_i\mathbin{\dot\cup}\{i\}\mathbin{\dot\cup}O_i,
 \qquad O_i=[n]\setminus T_i.
\]
For an ordered pair of codewords $(x,y)\in\C^2$, write $d_E(x,y)$ for their
Hamming distance restricted to a coordinate set $E\subseteq[n]$, $d_H(x,y)$
for their full Hamming distance, and $\1_{\{x_i\ne y_i\}}$ for the indicator
that they differ at coordinate $i$ (all made precise in
Section~\ref{sec:preliminaries}).  We retain the three weights
\[
 \bigl(d_{S_i}(x,y),\ \1_{\{x_i\ne y_i\}},\ d_{O_i}(x,y)\bigr).
\]
They record the difference on the helper set, at the recovered coordinate,
and outside the selected local view, and their sum is $d_H(x,y)$.

At the level of an individual pair, the outside distance is determined by
the global and local distances, so nothing is lost pair by pair.  The
distinction becomes meaningful only after aggregation: the global distance
distribution and a local distance marginal each forget which global and
local values occurred together for a given pair and recovery view, while
retaining all three block weights preserves that coupling.  This is the
finest block-weight description considered here, short of tracking
individual difference vectors.

The coordinate-partition form of the MacWilliams--Delsarte transform is
classical \cite{simonis1995}.  We apply it to the family of partitions
selected by the recovery views.  After averaging views having the same
length, the resulting LP has three ingredients.  Exact centered identities
connect every global distance layer to the three-block distribution.
Local-distance zeros and a projection-collision inequality impose the
recovery requirement.  Product-Krawtchouk positivity supplies the spectral
constraints.  The number of variables is polynomial in $n$ and
$r+\delta$.

The contributions are as follows.

\begin{enumerate}[label=(\roman*)]
\item We formulate the three-block LP for all-symbol $(r,\delta)$-locality
over every finite field, without assuming linearity, uniform recovery-view
length, or disjointness among recovery views.

\item We show that the global Delsarte transform, and its two
coordinate-centered sectors, decompose exactly into nonnegative
three-block components; aggregating these recovers the outside-distance
marginal of an earlier formulation, the convex-hull variables of
Li--Wei--Xiong (LWX), and the dual statistic of Gruica--Jany--Ravagnani
(GJR) as coarser projections.

\item We give exact rational bounds for fifteen parameter sets over
$q=2,3,4$.  The three-block LP is strictly stronger than the
outside-distance marginal in thirteen cases.  Verified constructions
determine seven exact maximum code sizes and twelve exact maximum linear
dimensions.
\end{enumerate}

Gruica, Jany, and Ravagnani's bound is specific to linear codes
\cite{gruica2026}, while Li, Wei, and Xiong retain factorial moments
through order $\delta-2$ for arbitrary, possibly nonlinear, LRCs
\cite{lwx2026}.  Agarwal et al.\ obtain alphabet-dependent bounds under a
disjoint repair-group assumption \cite{agarwal2018}; no such assumption is
made here.

Definition~\ref{def:three-block-lp} in Section~\ref{sec:three-block} gives
a complete, implementation-ready statement of the LP for readers who wish
to skip the proofs.

Section~\ref{sec:preliminaries} fixes the locality convention and proves the
coordinate-partition positivity used below.  Section~\ref{sec:three-block}
constructs the three-block distribution and gives the complete LP.
Section~\ref{sec:projections} develops its spectral decomposition and
selected projections.  Section~\ref{sec:results} presents the certified
bounds and attaining constructions.

\section{Locality and coordinate-partition positivity}
\label{sec:preliminaries}

Let $q$ be a prime power and let $\C\subseteq\F_q^n$ be a nonempty code of
size $M$; linearity is not assumed.  For a code $\C'\subseteq\F_q^n$, write
$d(\C')$ for its minimum Hamming distance, with the convention $d(\C')=0$ if
$\C'$ has fewer than two distinct words; we assume $\C$ has minimum distance
$d(\C)$ at least $d$ for a fixed target $d$.  Write
$[n]=\{1,\ldots,n\}$.  For $z\in\F_q^n$, its Hamming weight is
\[
 \wt(z)=|\{u\in[n]:z_u\ne0\}|.
\]
For $E\subseteq[n]$, let $\C|_E$ denote coordinate projection onto $E$ and
put
\[
 d_E(x,y)=\wt\bigl((x-y)|_E\bigr).
\]
In particular, $d_H=d_{[n]}$.

\begin{definition}[all-symbol local-distance locality]
\label{def:locality}
The code $\C$ has all-symbol $(r,\delta)$-locality if, for every $i\in[n]$,
there is a selected helper set $S_i\subseteq[n]\setminus\{i\}$ such that,
with
\[
 T_i=\{i\}\mathbin{\dot\cup}S_i,
\]
we have
\[
 |T_i|\le r+\delta-1,
 \qquad
 d(\C|_{T_i})\ge\delta.
\]
The set $T_i$ is the selected recovery view for coordinate $i$.
\end{definition}

Informally, $\delta$ is the number of erasures the recovery view can
tolerate on its own (its local minimum distance), while $r$ controls how
large that view may be: when $\delta=2$ this is the classical notion of
locality with $r$ helper symbols, and increasing $\delta$ allows a larger
view, of size up to $r+\delta-1$, in exchange for tolerating more
simultaneous local erasures.

Throughout, we assume
\[
 r\ge1,
 \qquad
 d\ge\delta\ge2,
\]
and select one recovery view $T_i$ for every coordinate.  Put
\[
 t_i=|T_i|,
 \qquad
 R_0=\min\{n,r+\delta-1\}.
\]
Then $\delta\le t_i\le R_0$.  The selected views may have different
lengths and may overlap arbitrarily.

The assumption $d\ge\delta$ entails no loss of generality.  Indeed,
locality already implies global minimum distance at least $\delta$: if
$x_i\ne y_i$, then $x|_{T_i}\ne y|_{T_i}$ and hence
\[
 d_H(x,y)\ge d_{T_i}(x,y)\ge\delta.
\]

The normalized global distance distribution is
\[
 A_\ell
 =\frac1M|\{(x,y)\in\C^2:d_H(x,y)=\ell\}|,
 \qquad 0\le\ell\le n.
\]
Thus
\[
 A_0=1,
 \qquad
 A_\ell=0\quad(1\le\ell<d),
 \qquad
 \sum_{\ell=0}^nA_\ell=M.
\]

For the $q$-ary Krawtchouk polynomial, we use
\begin{equation}
 \Kraw_w(j;m,q)
 =\sum_{u=0}^{w}(-1)^u(q-1)^{w-u}
  \binom{j}{u}\binom{m-j}{w-u}.
\label{eq:krawtchouk}
\end{equation}
A binomial coefficient is understood to be zero when its lower index is
outside its natural range.

For each selected recovery view, put $O_i=[n]\setminus T_i$.  The LP below
uses Lemma~\ref{lem:partition-positive} with the three blocks
\[
 (P_1,P_2,P_3)=(S_i,O_i,\{i\}),
\]
whose lengths are $t_i-1$, $n-t_i$, and $1$.

\begin{remark}[why keep $i$ separate from $S_i$]
Separating $i$ from $S_i$ adds no information beyond the complete
difference vector on $T_i$: it only prevents Hamming-weight compression
from forgetting whether the recovered coordinate itself participates in
the difference, a distinction the locality constraints in
Section~\ref{sec:three-block} depend on.
\end{remark}

\begin{lemma}[coordinate-partition positivity]
\label{lem:partition-positive}
Let
\[
 P_1\mathbin{\dot\cup}\cdots\mathbin{\dot\cup}P_h=[n],
 \qquad |P_v|=m_v,
\]
and define the partition distance distribution by
\[
 D_{j_1,\ldots,j_h}
 =\frac1M
 \left|
 \left\{(x,y)\in\C^2:
 d_{P_v}(x,y)=j_v\text{ for every }v\right\}
 \right|.
\]
For integers $0\le w_v\le m_v$, let
\[
 \Omega_{\mathbf w}
 =\left\{\alpha\in\F_q^n:
 \wt(\alpha|_{P_v})=w_v\text{ for every }v\right\},
 \qquad \mathbf w=(w_1,\ldots,w_h).
\]
Fix a nontrivial additive character $\chi$ of $\F_q$ and, for
$\alpha\in\F_q^n$, write
\[
 \chi_\alpha(x)=\chi\left(\sum_{u=1}^n\alpha_u x_u\right).
\]
Then
\begin{equation}
\begin{split}
 &\sum_{j_1,\ldots,j_h}D_{j_1,\ldots,j_h}
   \prod_{v=1}^h\Kraw_{w_v}(j_v;m_v,q)\\
 &\hspace{25mm}=
 \sum_{\alpha\in\Omega_{\mathbf w}}
 \frac1M\left|\sum_{x\in\C}\chi_\alpha(x)\right|^2
 \ge0.
\end{split}
\label{eq:partition-fourier-identity}
\end{equation}
\end{lemma}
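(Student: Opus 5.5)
The plan is to expand the right-hand side of \eqref{eq:partition-fourier-identity} and rewrite it as a sum over ordered pairs of codewords. Nonnegativity is immediate, since each term $\frac1M|\sum_{x\in\C}\chi_\alpha(x)|^2$ is a squared modulus. Expanding the square gives
\[
 \frac1M\Bigl|\sum_{x\in\C}\chi_\alpha(x)\Bigr|^2
 =\frac1M\sum_{x,y\in\C}\chi_\alpha(x)\overline{\chi_\alpha(y)}
 =\frac1M\sum_{x,y\in\C}\chi_\alpha(x-y),
\]
because $\chi$ is additive and takes values in roots of unity, so $\overline{\chi(a)}=\chi(-a)$. Summing over $\alpha\in\Omega_{\mathbf w}$ and exchanging the order of summation reduces the claim to a per-pair identity. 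For each $z=x-y\in\F_q^n$ we need
\[
 \sum_{\alpha\in\Omega_{\mathbf w}}\chi_\alpha(z)
 =\prod_{v=1}^h\Kraw_{w_v}\bigl(\wt(z|_{P_v});m_v,q\bigr).
\]
Grouping pairs $(x,y)$ by the vector $(d_{P_v}(x,y))_v=(j_v)_v$ then yields exactly $\sum D_{j_1,\ldots,j_h}\prod_v \Kraw_{w_v}(j_v;m_v,q)$.

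For the per-pair identity, the first step uses the fact that $\Omega_{\mathbf w}$ is a product set $\prod_v\Omega^{(v)}_{w_v}$. Here $\Omega^{(v)}_{w_v}\subseteq\F_q^{P_v}$ is the set of vectors of weight $w_v$. The character also factors, $\chi_\alpha(z)=\prod_v\chi\bigl(\sum_{u\in P_v}\alpha_uz_u\bigr)$, so the sum over $\Omega_{\mathbf w}$ splits as a product over the blocks. Each block factor is the classical sum $\sum_{\wt(\beta)=w}\chi(\langle\beta,z'\rangle)$ over $\beta\in\F_q^{m}$, where $z'\in\F_q^m$ has weight $j$.

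The second step evaluates this block factor by the standard computation. Choose the support $U$ of $\beta$ with $|U|=w$, and let $u=|U\cap\operatorname{supp}(z')|$. The nonzero entries of $\beta$ range independently over $\F_q^*$. Each coordinate where $z'_k=0$ contributes a factor $q-1$. Each coordinate where $z'_k\ne0$ contributes $\sum_{b\in\F_q^*}\chi(bz'_k)=-1$, because $\chi$ is nontrivial and $b\mapsto bz'_k$ permutes $\F_q^*$. Counting the supports with a given $u$ gives $\binom{j}{u}\binom{m-j}{w-u}$ of them. This reproduces \eqref{eq:krawtchouk}, including the convention that binomials outside their range vanish.

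Nothing here is deep. The only point needing care is the character sum $\sum_{b\ne0}\chi(bc)=-1$ for $c\ne0$, which relies on $\chi$ being nontrivial and on multiplication by $c$ permuting $\F_q^*$. Conjugation symmetry also has to be handled correctly when the square is expanded. Linearity of $\C$ is never used.
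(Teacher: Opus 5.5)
Your proposal is correct and follows the paper's own proof essentially step for step. Both expand the squared character sum into $\chi_\alpha(x-y)$, factor the sum over $\Omega_{\mathbf w}$ block by block, and evaluate each block factor by counting supports with the identity $\sum_{\lambda\in\F_q^\times}\chi(\lambda\tau)=-1$ for $\tau\ne0$ to recover $\Kraw_{w_v}(j_v;m_v,q)$.
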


\begin{proof}
We prove the equality; nonnegativity then follows because the right-hand
side is a sum of squared absolute values.  For fixed
$\alpha\in\F_q^n$,
\begin{align*}
 \frac1M\left|\sum_{x\in\C}\chi_\alpha(x)\right|^2
 &=\frac1M\sum_{x,y\in\C}
   \chi_\alpha(x)\overline{\chi_\alpha(y)}\\
 &=\frac1M\sum_{x,y\in\C}\chi_\alpha(x-y),
\end{align*}
because $\overline{\chi_\alpha(y)}=\chi_\alpha(-y)$.  Summing over
$\Omega_{\mathbf w}$ and interchanging finite sums gives
\begin{equation}
 \sum_{\alpha\in\Omega_{\mathbf w}}
 \frac1M\left|\sum_{x\in\C}\chi_\alpha(x)\right|^2
 =\frac1M\sum_{x,y\in\C}
  \sum_{\alpha\in\Omega_{\mathbf w}}\chi_\alpha(x-y).
\label{eq:expanded-energy}
\end{equation}

Fix $(x,y)$ and write $z=x-y$.  Since the blocks are disjoint, choosing
$\alpha\in\Omega_{\mathbf w}$ is equivalent to choosing independently
$\beta_v\in\F_q^{P_v}$ with $\wt(\beta_v)=w_v$.  Hence
\begin{equation}
 \sum_{\alpha\in\Omega_{\mathbf w}}\chi_\alpha(z)
 =\prod_{v=1}^h
 \left(
 \sum_{\substack{\beta_v\in\F_q^{P_v}\\\wt(\beta_v)=w_v}}
 \chi\left(\sum_{u\in P_v}\beta_{v,u}z_u\right)
 \right).
\label{eq:block-factorization}
\end{equation}

We evaluate one factor.  Put
$j_v=\wt(z|_{P_v})=d_{P_v}(x,y)$.  The elementary identity
\begin{equation}
 \sum_{\lambda\in\F_q^\times}\chi(\lambda\tau)
 =\begin{cases}
 q-1,&\tau=0,\\
 -1,&\tau\ne0
 \end{cases}
\label{eq:nonzero-character-sum}
\end{equation}
follows because multiplication by nonzero $\tau$ permutes $\F_q^\times$ and
$\sum_{\mu\in\F_q}\chi(\mu)=0$ for nontrivial $\chi$.

Suppose exactly $\sigma$ coordinates of the support of $\beta_v$ lie among
the $j_v$ nonzero coordinates of $z|_{P_v}$.  The support can be chosen
in
\[
 \binom{j_v}{\sigma}\binom{m_v-j_v}{w_v-\sigma}
\]
ways.  By \eqref{eq:nonzero-character-sum}, these coordinates contribute
$(-1)^\sigma$, while the other $w_v-\sigma$ support coordinates contribute
$(q-1)^{w_v-\sigma}$.  Therefore
\begin{align*}
 &\sum_{\substack{\beta_v\in\F_q^{P_v}\\\wt(\beta_v)=w_v}}
 \chi\left(\sum_{u\in P_v}\beta_{v,u}z_u\right)\\
 &\qquad=\sum_{\sigma=0}^{w_v}(-1)^\sigma(q-1)^{w_v-\sigma}
 \binom{j_v}{\sigma}\binom{m_v-j_v}{w_v-\sigma}
 =\Kraw_{w_v}(j_v;m_v,q).
\end{align*}
Substituting into \eqref{eq:block-factorization} gives
\[
 \sum_{\alpha\in\Omega_{\mathbf w}}\chi_\alpha(x-y)
 =\prod_{v=1}^h
 \Kraw_{w_v}\bigl(d_{P_v}(x,y);m_v,q\bigr).
\]
Insert this into \eqref{eq:expanded-energy} and group ordered pairs by
the tuple $(d_{P_1}(x,y),\ldots,d_{P_h}(x,y))$.  The result is the
left-hand side of \eqref{eq:partition-fourier-identity}.
\end{proof}

If $\C$ is linear, the right-hand side of
\eqref{eq:partition-fourier-identity} is $|\C|$ times the corresponding
partitioned dual weight distribution.  The character proof is used because
it applies equally to nonlinear codes.

\section{The three-block distance distribution and LP}
\label{sec:three-block}

For each coordinate $i$, the selected recovery view determines
\[
 [n]=S_i\mathbin{\dot\cup}O_i\mathbin{\dot\cup}\{i\},
 \qquad O_i=[n]\setminus T_i.
\]
For $(x,y)\in\C^2$, put
\[
 a=d_{S_i}(x,y),
 \qquad
 c=d_{O_i}(x,y),
 \qquad
 b=\1_{\{x_i\ne y_i\}}.
\]
Then $d_H(x,y)=a+c+b$.  For
\[
 0\le a<t_i,\qquad 0\le c\le n-t_i,\qquad b\in\{0,1\},
\]
define
\[
 J^{(i)}_{a,c,b}
 =\frac1M
 \left|
 \left\{(x,y)\in\C^2:
 d_{S_i}(x,y)=a,\ d_{O_i}(x,y)=c,\
 \1_{\{x_i\ne y_i\}}=b\right\}
 \right|.
\]
We average selected views according to their local length:
\begin{equation}
 Y_{t,a,c,b}
 =\frac1n\sum_{\substack{i\in[n]\\t_i=t}}J^{(i)}_{a,c,b},
\label{eq:y-definition}
\end{equation}
where
\[
 \delta\le t\le R_0,
 \qquad 0\le a<t,
 \qquad 0\le c\le n-t,
 \qquad b\in\{0,1\}.
\]
We call $(Y_{t,a,c,b})$ the \emph{three-block distance distribution}.

\begin{proposition}[centered identities]
\label{prop:centered}
For every $0\le\ell\le n$,
\begin{align}
 \sum_{\substack{t,a,c\\a+c=\ell}}Y_{t,a,c,0}
 &=\frac{n-\ell}{n}A_\ell,
\label{eq:center-same}\\
 \sum_{\substack{t,a,c\\a+c+1=\ell}}Y_{t,a,c,1}
 &=\frac{\ell}{n}A_\ell.
\label{eq:center-different}
\end{align}
\end{proposition}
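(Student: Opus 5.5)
The plan is a double-counting argument over pairs $(x,y)\in\C^2$ and coordinates $i\in[n]$. The identities are pure bookkeeping: each $J^{(i)}_{a,c,b}$ refines the global distance layer by recording how the $\ell=d_H(x,y)$ differing positions split among $S_i$, $\{i\}$ and $O_i$. Summing over $t$ with the constraint $t_i=t$ in \eqref{eq:y-definition} is the same as summing over all $i\in[n]$, because every coordinate has exactly one selected view of some length $t_i\in[\delta,R_0]$. So
\[
 \sum_{t}\sum_{\substack{a,c\\a+c+b=\ell}}Y_{t,a,c,b}
 =\frac1n\sum_{i=1}^n\sum_{\substack{a,c\\a+c+b=\ell}}J^{(i)}_{a,c,b}.
\]
The ranges $0\le a<t$ and $0\le c\le n-t$ impose no restriction, since $d_{S_i}\le t_i-1$ and $d_{O_i}\le n-t_i$ automatically.

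Next, fix $i$ and $b$. Because $d_H(x,y)=a+c+b$, the inner sum over $a+c+b=\ell$ equals $\frac1M$ times the number of pairs $(x,y)$ with $d_H(x,y)=\ell$ and $\1_{\{x_i\ne y_i\}}=b$. Summing over $i$ and interchanging the sums, the right-hand side for $b=1$ becomes
\[
 \frac1{nM}\sum_{\substack{(x,y)\in\C^2\\d_H(x,y)=\ell}}
 \bigl|\{i\in[n]:x_i\ne y_i\}\bigr|
 =\frac1{nM}\cdot\ell\cdot MA_\ell=\frac{\ell}{n}A_\ell,
\]
since each such pair differs in exactly $\ell$ coordinates. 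For $b=0$, each such pair agrees in exactly $n-\ell$ coordinates, which gives $\frac{n-\ell}{n}A_\ell$.

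The boundary cases need only a brief check. For $\ell=0$ the $b=1$ sum is empty and $\ell A_\ell=0$. For $\ell=n$ no pair agrees anywhere, so the $b=0$ side is $0$.

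There is no real obstacle. The one point to state carefully is that the double sum over $t$ and over $\{i:t_i=t\}$ is a partition of $[n]$. This uses that exactly one view is selected per coordinate, as fixed after Definition~\ref{def:locality}, even though views may overlap and have different lengths.
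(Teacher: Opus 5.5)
Your proposal is correct and is the paper's argument: for each pair at distance $\ell$, count the $n$ choices of distinguished coordinate, split them into $\ell$ differing and $n-\ell$ agreeing positions, and average via \eqref{eq:y-definition}. Your explicit checks are details the paper leaves implicit: that the double sum over $t$ and $\{i:t_i=t\}$ partitions $[n]$, and that the ranges on $a,c$ impose no restriction.
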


\begin{proof}
Fix an ordered pair at distance $\ell$.  Among the $n$ choices of the
distinguished coordinate, the pair agrees at exactly $n-\ell$ coordinates
and differs at exactly $\ell$ coordinates.  For each choice, the remaining
difference is divided between $S_i$ and $O_i$.  Sum over all ordered pairs
and use \eqref{eq:y-definition}.
\end{proof}

In particular,
\begin{equation}
 \sum_{t,a,c,b}Y_{t,a,c,b}=M,
 \qquad
 \sum_t Y_{t,0,0,0}=1.
\label{eq:three-block-mass}
\end{equation}

Since the distance inside $T_i$ is $a+b$, locality gives
\begin{equation}
 Y_{t,a,c,b}=0
 \qquad\text{whenever }1\le a+b<\delta.
\label{eq:local-zero}
\end{equation}

\begin{proposition}[projection-collision inequality]
\label{prop:collision}
For every $\delta\le t\le R_0$,
\begin{equation}
 q^{\delta-1}\sum_{a,c,b}Y_{t,a,c,b}
 \le q^t\sum_c Y_{t,0,c,0}.
\label{eq:collision}
\end{equation}
\end{proposition}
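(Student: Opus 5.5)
The plan is to prove the inequality one recovery view at a time and then average. Fix a coordinate $i$ with $t_i=t$ and write $T=T_i$. We will show
\[
 q^{\delta-1}\sum_{a,c,b}J^{(i)}_{a,c,b}\le q^{t}\sum_c J^{(i)}_{0,c,0}.
\]
Summing this over all $i$ with $t_i=t$ and dividing by $n$ then gives \eqref{eq:collision} directly from \eqref{eq:y-definition}.

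For the single-view inequality, first identify the two sides combinatorially. Since $(a,c,b)$ ranges over all values, the left sum is $\sum_{a,c,b}J^{(i)}_{a,c,b}=M^2/M=M$. The right-hand quantity $\sum_c J^{(i)}_{0,c,0}$ is $\frac1M$ times the number of ordered pairs $(x,y)\in\C^2$ with $x|_T=y|_T$. Indeed, $a=0$ and $b=0$ say exactly that the pair agrees on $S_i\cup\{i\}=T$. Group the codewords by their projection $p\in\C|_T$, and let $N_p$ be the fiber size. The collision count is then $\sum_p N_p^2$. So the claim becomes
\[
 q^{\delta-1}M^2\le q^t\sum_{p\in\C|_T}N_p^2 .
\]

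Next, apply Cauchy--Schwarz: $M^2=\bigl(\sum_p N_p\bigr)^2\le|\C|_T|\sum_p N_p^2$. It therefore suffices to prove $|\C|_T|\le q^{t-\delta+1}$. This is the Singleton bound for the length-$t$ code $\C|_T$, which has minimum distance at least $\delta$ by Definition~\ref{def:locality}. The argument is to delete any $\delta-1$ coordinates of $T$. Two distinct words of $\C|_T$ differ in at least $\delta$ positions, so they remain distinct after deletion. The projection is thus injective into $\F_q^{t-\delta+1}$.

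One degenerate case needs a check: if $|\C|_T|=1$, the Singleton bound holds trivially, since $t\ge\delta$. Combining the two inequalities gives $q^{\delta-1}M^2\le q^{\delta-1}|\C|_T|\sum_p N_p^2\le q^t\sum_pN_p^2$, and dividing by $M$ yields the per-view inequality.

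I expect no step to be a real obstacle. The only point needing care is the bookkeeping that the right side counts exactly the pairs agreeing on all of $T$. This requires both $a=0$ and $b=0$, with $c$ free, which matches the sum $\sum_c Y_{t,0,c,0}$ in \eqref{eq:collision}.
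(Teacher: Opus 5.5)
Your proof is correct and follows the paper's argument essentially step for step: you decompose $\C$ into fibres of $\C\to\C|_{T_i}$, apply Cauchy--Schwarz and the Singleton bound $|\C|_{T_i}|\le q^{t-\delta+1}$, and then sum over the $i$ with $t_i=t$ and divide by $n$. Your separate check of the case $|\C|_T|=1$ is harmless but unnecessary: the convention $d(\C')=0$ for fewer than two distinct words, together with $\delta\ge2$, already forces $\C|_{T_i}$ to contain at least two words.
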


\begin{proof}
Fix $i$ with $t_i=t$ and partition $\C$ into the fibres of the projection
$\C\to\C|_{T_i}$ (the preimage classes of its distinct projected values).
If their sizes are $f_1,\ldots,f_N$, then
\[
 \sum_cJ^{(i)}_{0,c,0}=\frac1M\sum_{u=1}^Nf_u^2.
\]
The projected code has length $t$ and minimum distance at least $\delta$,
so the Singleton bound gives $N\le q^{t-\delta+1}$.  By
Cauchy--Schwarz,
\[
 \sum_{u=1}^Nf_u^2\ge\frac{M^2}{N}
 \ge M^2q^{-(t-\delta+1)}.
\]
Hence
\[
 q^t\sum_cJ^{(i)}_{0,c,0}\ge q^{\delta-1}M.
\]
Sum over all $i$ with $t_i=t$, divide by $n$, and use
$\sum_{a,c,b}J^{(i)}_{a,c,b}=M$.
\end{proof}

For
\[
 0\le p<t,
 \qquad 0\le s\le n-t,
 \qquad e\in\{0,1\},
\]
define the product-Krawtchouk transform
\begin{equation}
 F_{t,p,s,e}
 =\sum_{a,c,b}Y_{t,a,c,b}
 \Kraw_p(a;t-1,q)\Kraw_s(c;n-t,q)\Kraw_e(b;1,q).
\label{eq:F-definition}
\end{equation}

\begin{proposition}[three-block positivity]
\label{prop:three-block-positive}
For every index in the displayed ranges,
\begin{equation}
 F_{t,p,s,e}\ge0.
\label{eq:F-positive}
\end{equation}
\end{proposition}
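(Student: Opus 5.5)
The plan is to reduce the statement to Lemma~\ref{lem:partition-positive}, applied one coordinate at a time, and then average. We fix $t$ and $(p,s,e)$ in the displayed ranges. Substituting the definition \eqref{eq:y-definition} into \eqref{eq:F-definition} and exchanging the finite sums gives
\[
 F_{t,p,s,e}
 =\frac1n\sum_{\substack{i\in[n]\\t_i=t}}
 \sum_{a,c,b}J^{(i)}_{a,c,b}\,
 \Kraw_p(a;t-1,q)\Kraw_s(c;n-t,q)\Kraw_e(b;1,q).
\]
It therefore suffices to show that each inner sum over $(a,c,b)$ is nonnegative for every fixed $i$ with $t_i=t$. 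An empty index set gives $F_{t,p,s,e}=0$, so the claim holds trivially in that case.

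For a fixed $i$ with $t_i=t$, we apply Lemma~\ref{lem:partition-positive} with $h=3$ and the partition $(P_1,P_2,P_3)=(S_i,O_i,\{i\})$. The block lengths are $m_1=t-1$, $m_2=n-t$ and $m_3=1$. The partition distance distribution $D_{j_1,j_2,j_3}$ of this partition is exactly $J^{(i)}_{j_1,j_2,j_3}$: both count ordered pairs by $(d_{S_i},d_{O_i},\1_{\{x_i\ne y_i\}})$ and both are normalized by $1/M$. The ranges of $(p,s,e)$ are precisely $0\le w_v\le m_v$, because $p<t$ means $p\le t-1$, $s\le n-t$, and $e\in\{0,1\}$.

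The lemma then identifies the inner sum with
\[
 \sum_{\alpha\in\Omega_{(p,s,e)}}\frac1M\Bigl|\sum_{x\in\C}\chi_\alpha(x)\Bigr|^2,
\]
which is nonnegative. Averaging over $i$ with the nonnegative weight $1/n$ preserves the sign, so $F_{t,p,s,e}\ge0$.

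The only point requiring care is the index ranges. The variables $a,c,b$ in $Y_{t,a,c,b}$ range over $0\le a<t$, $0\le c\le n-t$ and $b\in\{0,1\}$, which is exactly the support of $J^{(i)}$ for $t_i=t$, so no terms are lost or added. A degenerate block, such as $O_i=\emptyset$ when $t=n$, is also covered. In that case $s=0$ and $\Kraw_0(0;0,q)=1$, and the lemma still applies to the nonempty blocks, with the empty block contributing the trivial factor. I expect no real obstacle beyond these bookkeeping checks.
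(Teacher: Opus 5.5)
Your proposal is correct and matches the paper's proof: for each $i$ with $t_i=t$, apply Lemma~\ref{lem:partition-positive} to the partition $S_i\mathbin{\dot\cup}O_i\mathbin{\dot\cup}\{i\}$, identify $D_{a,c,b}$ with $J^{(i)}_{a,c,b}$, and average with the nonnegative weight $1/n$. Your additional checks are accurate and simply spell out what the paper's one-line proof leaves implicit: the matching of the index ranges, the empty block $O_i$ when $t=n$, and the case where no coordinate has $t_i=t$.
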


\begin{proof}
For each $i$ with $t_i=t$, apply
Lemma~\ref{lem:partition-positive} to
$S_i\mathbin{\dot\cup}O_i\mathbin{\dot\cup}\{i\}$ and average the
resulting inequalities over all such coordinates.
\end{proof}

\subsection{Complete LP formulation}
\label{subsec:three-block-lp}

We now collect the variables, index ranges, and constraints in a form that
can be implemented directly.  Fix
\[
 q,\qquad n,\qquad d\ge\delta\ge2,\qquad r\ge1,
 \qquad R_0=\min\{n,r+\delta-1\}.
\]
No code or explicit choice of recovery views is required to state or solve
the LP: these five parameters alone determine every variable, index range,
and constraint below.  The primary variables are
\[
 A_\ell\quad(0\le\ell\le n)
\]
and
\[
 Y_{t,a,c,b}
 \quad\left(
 \delta\le t\le R_0,\ 0\le a<t,\ 0\le c\le n-t,\ b\in\{0,1\}
 \right).
\]
For implementation, each $F_{t,p,s,e}$ is the linear form
\[
 F_{t,p,s,e}
 =\sum_{a=0}^{t-1}\sum_{c=0}^{n-t}\sum_{b=0}^{1}
 Y_{t,a,c,b}
 \Kraw_p(a;t-1,q)\Kraw_s(c;n-t,q)\Kraw_e(b;1,q).
\]

\begin{definition}[three-block LP]
\label{def:three-block-lp}
The \emph{three-block LP} maximizes
\begin{equation}
 \sum_{\ell=0}^nA_\ell
\label{eq:three-block-objective}
\end{equation}
over the primary variables subject to the following constraints.

\medskip
\noindent
\textbf{Global distance:}
\begin{align}
 A_0&=1,
\label{eq:lp-A0}\\
 A_\ell&\ge0
 &&(0\le\ell\le n),
\label{eq:lp-A-nonnegative}\\
 A_\ell&=0
 &&(1\le\ell<d).
\label{eq:lp-A-zero}
\end{align}

\medskip
\noindent
\textbf{Centered identities:} for $0\le\ell\le n$,
\begin{align}
 \sum_{\substack{\delta\le t\le R_0\\
                  0\le a<t,\ 0\le c\le n-t\\a+c=\ell}}
 Y_{t,a,c,0}
 &=\frac{n-\ell}{n}A_\ell,
\label{eq:lp-center-same}\\
 \sum_{\substack{\delta\le t\le R_0\\
                  0\le a<t,\ 0\le c\le n-t\\a+c+1=\ell}}
 Y_{t,a,c,1}
 &=\frac{\ell}{n}A_\ell.
\label{eq:lp-center-different}
\end{align}

\medskip
\noindent
\textbf{Locality:}
\begin{align}
 Y_{t,a,c,b}&\ge0,
\label{eq:lp-y-nonnegative}\\
 Y_{t,a,c,b}&=0
 &&\text{if }1\le a+b<\delta,
\label{eq:lp-local-zero}\\
 q^{\delta-1}\sum_{a=0}^{t-1}\sum_{c=0}^{n-t}\sum_{b=0}^{1}
 Y_{t,a,c,b}
 &\le q^t\sum_{c=0}^{n-t}Y_{t,0,c,0}
 &&(\delta\le t\le R_0).
\label{eq:lp-collision}
\end{align}

\medskip
\noindent
\textbf{Three-block Fourier positivity:}
\begin{equation}
 F_{t,p,s,e}\ge0
\label{eq:lp-fourier}
\end{equation}
for
\[
 \delta\le t\le R_0,
 \qquad 0\le p<t,
 \qquad 0\le s\le n-t,
 \qquad e\in\{0,1\}.
\]
The optimum is denoted by
\[
 U_{\mathrm{3blk}}(q,n,d,r,\delta).
\]
\end{definition}

The centered identities imply
\[
 \sum_{t,a,c,b}Y_{t,a,c,b}=\sum_{\ell=0}^nA_\ell,
\]
so no separate total-mass constraint is needed.

\begin{theorem}[validity of the three-block LP]
\label{thm:three-block-validity}
Every $q$-ary code with minimum distance at least $d$ and all-symbol
$(r,\delta)$-locality satisfies
\[
 |\C|\le U_{\mathrm{3blk}}(q,n,d,r,\delta).
\]
No linearity assumption is required.
\end{theorem}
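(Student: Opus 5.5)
The plan is to show that every code $\C$ as in the statement yields a feasible point of the three-block LP whose objective value is $|\C|$. Fix such a code of size $M$ and, for each coordinate $i$, one selected recovery view $T_i=\{i\}\mathbin{\dot\cup}S_i$ as in Definition~\ref{def:locality}. The candidate point is the normalized global distance distribution $A_\ell$ of $\C$ together with the averaged three-block distribution $Y_{t,a,c,b}$ from \eqref{eq:y-definition}. By the displayed properties of $A_\ell$, the objective $\sum_\ell A_\ell$ equals $M$. Hence $M\le U_{\mathrm{3blk}}(q,n,d,r,\delta)$ as soon as feasibility is checked, and the LP optimum is finite because the set of feasible points contains this point and the maximum is taken over a polytope; boundedness is not needed for the inequality itself.

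\textbf{Index ranges.} We first check that the variables are well placed. Every $t_i$ lies in $[\delta,R_0]$, because $|T_i|\le r+\delta-1$, $|T_i|\le n$, and $d(\C|_{T_i})\ge\delta$ forces $|T_i|\ge\delta$ whenever $M\ge2$. When $M=1$ a view with $|T_i|<\delta$ could occur. In that case we may simply enlarge $T_i$ within the size bound, or observe that $M=1\le U_{\mathrm{3blk}}$ trivially once any feasible point exists; the latter holds because $A_0=1$ with all $Y$ concentrated appropriately is feasible. We also note that $a=d_{S_i}<t_i$ and $c\le n-t_i$, so every nonzero $J^{(i)}_{a,c,b}$ lands on a declared variable.

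\textbf{Constraints.} Each constraint family is then matched to a result already proved.
\begin{itemize}
\item \eqref{eq:lp-A0}--\eqref{eq:lp-A-zero} follow from $d(\C)\ge d$ and the definition of $A_\ell$.
\item The centered identities \eqref{eq:lp-center-same}--\eqref{eq:lp-center-different} are exactly Proposition~\ref{prop:centered}. The sums there range over all $t$, $a$, $c$ in the LP ranges, and this is legitimate because each $i$ contributes to exactly one $t$.
\item Nonnegativity \eqref{eq:lp-y-nonnegative} is immediate, since $Y$ is an average of counts.
\item The local zeros \eqref{eq:lp-local-zero} are \eqref{eq:local-zero}: a pair with $1\le d_{T_i}(x,y)=a+b<\delta$ would contradict $d(\C|_{T_i})\ge\delta$. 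Note that $x|_{T_i}\ne y|_{T_i}$ here, so this is genuinely a pair of distinct projected words.
\item The collision inequality \eqref{eq:lp-collision} is Proposition~\ref{prop:collision}, applied for each $t$. For values of $t$ attained by no $i$, both sides are zero.
\item The Fourier constraints \eqref{eq:lp-fourier} are Proposition~\ref{prop:three-block-positive}. It rests on Lemma~\ref{lem:partition-positive} applied to the three blocks $(S_i,O_i,\{i\})$ of lengths $(t-1,n-t,1)$. The product of Krawtchouk factors in \eqref{eq:F-definition} matches, up to ordering of the factors, the lemma's left-hand side with $D$ replaced by $J^{(i)}$. Averaging nonnegative quantities over $i$ with $t_i=t$ preserves nonnegativity.
\end{itemize}
Nothing in these checks used linearity: the character argument of Lemma~\ref{lem:partition-positive} and the Singleton/Cauchy--Schwarz step are valid for arbitrary codes.

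\textbf{Expected obstacle.} The only delicate point is the degenerate bookkeeping. This means ensuring that views of length $t$ with empty $O_i$ (when $t=n$) or empty $S_i$ are handled by the Krawtchouk conventions with $m=0$, where $\Kraw_0(0;0,q)=1$. It also means handling the trivial case $M=1$ above. I expect both to be routine once stated explicitly.
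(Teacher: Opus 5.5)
Your proposal is correct and follows the paper's proof: you build $A_\ell$ and $Y_{t,a,c,b}$ from the code and a fixed choice of views, then cite Propositions~\ref{prop:centered}, \ref{prop:collision}, \ref{prop:three-block-positive} and \eqref{eq:local-zero} for feasibility, and note that the objective value is $|\C|$. Your extra bookkeeping is harmless, though the $M=1$ case is actually vacuous: under the paper's convention $d(\C')=0$ for codes with fewer than two distinct words, the condition $d(\C|_{T_i})\ge\delta\ge2$ already forces $\C|_{T_i}$ to contain two distinct words, and hence $t_i\ge\delta$ for every code with locality.
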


\begin{proof}
Choose one admissible recovery view for each coordinate and construct
$A_\ell$ and $Y_{t,a,c,b}$ from the code.  The centered identities,
local-distance zeros, collision inequalities, and Fourier inequalities
were proved above.  The resulting feasible point has objective value
$\sum_\ell A_\ell=|\C|$.
\end{proof}

The LP has
\[
 n+1+2\sum_{t=\delta}^{R_0}t(n-t+1)=O(nR_0^2)
\]
primary variables.

\begin{remark}[linear-code interpretation]
\label{rem:linear}
If $\C$ is linear, let $A^\perp_{i;p,s,e}$ denote the number of dual
codewords having weights $p,s,e$ on $S_i,O_i,\{i\}$, respectively.  Then
\[
 F_{t,p,s,e}
 =\frac{|\C|}{n}
 \sum_{\substack{i\in[n]\\t_i=t}}A^\perp_{i;p,s,e}.
\]
This interpretation is separate from the arbitrary-code validity proof.
\end{remark}

\section{Spectral decomposition and selected projections}
\label{sec:projections}

The variables $Y_{t,a,c,b}$ can be made smaller by summing over indices or
retaining selected linear moments.  There are many such projections.  We
first derive two exact spectral identities used throughout, and then record
three projections relevant for comparison: the global-distance
projection, the outside-distance marginal used in an earlier formulation,
and the moment statistics appearing in LWX and GJR.

The Krawtchouk generating function
\begin{equation}
 \sum_{w=0}^m\Kraw_w(j;m,q)z^w
 =(1+(q-1)z)^{m-j}(1-z)^j
\label{eq:kraw-generating}
\end{equation}
gives
\begin{equation}
 \Kraw_w(a+c+b;n,q)
 =\sum_{p+s+e=w}
 \Kraw_p(a;t-1,q)\Kraw_s(c;n-t,q)\Kraw_e(b;1,q).
\label{eq:product-kraw}
\end{equation}
Define the global transform
\[
 B_w=\sum_{\ell=0}^nA_\ell\Kraw_w(\ell;n,q).
\]
The ordinary Delsarte LP maximizes $\sum_\ell A_\ell$ subject to
$A_0=1$, $A_\ell\ge0$, $A_\ell=0$ for $1\le\ell<d$, and
$B_w\ge0$ for $0\le w\le n$.  Its optimum is denoted by
$U_{\mathrm{Del}}(q,n,d)$.

\begin{theorem}[global spectral decomposition]
\label{thm:global-decomposition}
Every feasible point of the three-block LP satisfies
\begin{equation}
 B_w=\sum_t\sum_{p+s+e=w}F_{t,p,s,e}.
\label{eq:global-decomposition}
\end{equation}
In particular, $B_w\ge0$, so the ordinary Delsarte inequalities are
implied by the three-block LP.
\end{theorem}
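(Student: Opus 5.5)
The plan is to start from the right-hand side of \eqref{eq:global-decomposition} and collapse it to $B_w$ using only the LP constraints. There are three steps: expand each $F_{t,p,s,e}$ by its definition \eqref{eq:F-definition}, use the product formula \eqref{eq:product-kraw} to merge the three Krawtchouk factors into one global Krawtchouk value, and then regroup by global distance $\ell=a+c+b$ so that the centered identities \eqref{eq:lp-center-same}--\eqref{eq:lp-center-different} turn the $Y$-sums into $A_\ell$.

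\textbf{Step 1 (product formula).} Substituting \eqref{eq:F-definition} and exchanging the finite sums gives
\[
 \sum_t\sum_{p+s+e=w}F_{t,p,s,e}
 =\sum_t\sum_{a,c,b}Y_{t,a,c,b}
 \sum_{p+s+e=w}\Kraw_p(a;t-1,q)\Kraw_s(c;n-t,q)\Kraw_e(b;1,q).
\]
The inner sum equals $\Kraw_w(a+c+b;n,q)$ by \eqref{eq:product-kraw}. That formula follows by multiplying the three generating functions \eqref{eq:kraw-generating} for block lengths $t-1$, $n-t$, $1$: the exponents of $(1+(q-1)z)$ add to $n-(a+c+b)$ and those of $(1-z)$ add to $a+c+b$.

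Some care is needed with index ranges. In \eqref{eq:F-definition}, $p$ ranges over $0\le p<t$, $s$ over $0\le s\le n-t$, and $e$ over $\{0,1\}$. These are exactly the ranges where the corresponding Krawtchouk polynomials can be nonzero, and a factor with index above its block length vanishes identically by the binomial convention. So the sum over $p+s+e=w$ restricted to the admissible ranges coincides with the full coefficient extraction. This holds also when $w$ exceeds $n$ minus something, because out-of-range terms are simply zero.

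\textbf{Step 2 (regrouping by $\ell$).} Collect terms according to $\ell=a+c+b$ and split by $b$:
\[
 \sum_{\ell=0}^n\Kraw_w(\ell;n,q)\Bigl(\sum_{t,\,a+c=\ell}Y_{t,a,c,0}+\sum_{t,\,a+c+1=\ell}Y_{t,a,c,1}\Bigr).
\]
Only $\ell\le n$ occurs, since $a\le t-1$, $c\le n-t$ and $b\le1$ give $a+c+b\le n$. By \eqref{eq:lp-center-same} and \eqref{eq:lp-center-different}, the bracket equals $\frac{n-\ell}{n}A_\ell+\frac{\ell}{n}A_\ell=A_\ell$. The whole expression is therefore $\sum_\ell A_\ell\Kraw_w(\ell;n,q)=B_w$, which proves \eqref{eq:global-decomposition}.

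\textbf{Step 3 (consequence).} Each $F_{t,p,s,e}$ is nonnegative by \eqref{eq:lp-fourier}, so $B_w\ge0$ for $0\le w\le n$. The remaining Delsarte constraints $A_0=1$, $A_\ell\ge0$, and $A_\ell=0$ for $1\le\ell<d$ are already imposed in the three-block LP. Hence the $A$-part of any feasible point is Delsarte-feasible, and $U_{\mathrm{3blk}}\le U_{\mathrm{Del}}$. This is the sense in which the three-block LP's optimum dominates the Delsarte bound, i.e.\ gives a bound at least as strong.

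The only point I expect to require care is the index-range bookkeeping in Step 1: the $F$'s are defined only for restricted $(p,s,e)$, and one must check that no term of the generating-function expansion is lost. Everything else is a direct rearrangement of finite sums.
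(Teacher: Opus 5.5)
Your proposal is correct and matches the paper's proof. The paper adds the two centered identities to get $\sum_{t,a,c,b}Y_{t,a,c,b}f(a+c+b)=\sum_\ell A_\ell f(\ell)$ for every $f$, then takes $f=\Kraw_w(\cdot\,;n,q)$ and applies \eqref{eq:product-kraw}; this is your Steps 1--2 in the reverse order, and your explicit check that the admissible $(p,s,e)$ ranges lose no terms is a detail the paper leaves implicit.
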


\begin{proof}
Adding the two centered identities gives, for every function $f$,
\[
 \sum_{t,a,c,b}Y_{t,a,c,b}f(a+c+b)
 =\sum_{\ell=0}^nA_\ell f(\ell).
\]
Take $f(\ell)=\Kraw_w(\ell;n,q)$, substitute
\eqref{eq:product-kraw}, and use \eqref{eq:F-definition}.
\end{proof}

\begin{proposition}[distinguished-coordinate sectors]
\label{prop:sector-identities}
For $1\le w\le n$,
\begin{equation}
 \sum_t\sum_{p+s=w-1}F_{t,p,s,1}=\frac{w}{n}B_w,
\label{eq:nonzero-sector}
\end{equation}
and, for $0\le w<n$,
\begin{equation}
 \sum_t\sum_{p+s=w}F_{t,p,s,0}=\frac{n-w}{n}B_w.
\label{eq:zero-sector}
\end{equation}
\end{proposition}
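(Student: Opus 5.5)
The plan is to handle the two sectors separately, each time feeding one of the centered identities of Proposition~\ref{prop:centered} (equivalently \eqref{eq:lp-center-same}--\eqref{eq:lp-center-different}) into the definition \eqref{eq:F-definition}. This avoids going through Theorem~\ref{thm:global-decomposition}, which adds the two identities and so cannot separate the sectors.

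\textbf{Collapsing the two large blocks.} Fix $e$ and sum $F_{t,p,s,e}$ over $p+s=m$. Multiplying the generating functions \eqref{eq:kraw-generating} for lengths $t-1$ and $n-t$ gives
\[
 \sum_{p+s=m}\Kraw_p(a;t-1,q)\Kraw_s(c;n-t,q)=\Kraw_m(a+c;n-1,q).
\]
This is the analogue of \eqref{eq:product-kraw} with the block $\{i\}$ removed. Hence
\[
 \sum_{p+s=m}F_{t,p,s,e}=\sum_{a,c,b}Y_{t,a,c,b}\,\Kraw_e(b;1,q)\,\Kraw_m(a+c;n-1,q).
\]
The one-coordinate values are $\Kraw_0(b;1,q)=1$, $\Kraw_1(0;1,q)=q-1$ and $\Kraw_1(1;1,q)=-1$.

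\textbf{Applying the centered identities.} Summing over $t$, split into the $b=0$ terms, for which $a+c=\ell$, and the $b=1$ terms, for which $a+c=\ell-1$. Apply \eqref{eq:center-same} and \eqref{eq:center-different} respectively. The left side of \eqref{eq:nonzero-sector}, with $m=w-1$, becomes
\[
 \frac1n\sum_\ell A_\ell\Bigl[(n-\ell)(q-1)\Kraw_{w-1}(\ell;n-1,q)-\ell\,\Kraw_{w-1}(\ell-1;n-1,q)\Bigr].
\]
The left side of \eqref{eq:zero-sector}, with $m=w$, becomes
\[
 \frac1n\sum_\ell A_\ell\Bigl[(n-\ell)\Kraw_{w}(\ell;n-1,q)+\ell\,\Kraw_{w}(\ell-1;n-1,q)\Bigr].
\]

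\textbf{The main step: two Krawtchouk identities.} It remains to show that the two brackets equal $w\Kraw_w(\ell;n,q)$ and $(n-w)\Kraw_w(\ell;n,q)$. I would prove both by double counting, using the character-sum interpretation from the proof of Lemma~\ref{lem:partition-positive}:
\[
 \Kraw_w(\ell;n,q)=\sum_{\wt(\alpha)=w}\chi(\alpha\cdot z)\qquad\text{for any fixed } z \text{ with } \wt(z)=\ell.
\]

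For the first identity, count pairs $(\alpha,u)$ with $\wt(\alpha)=w$ and $u\in\operatorname{supp}\alpha$. Summing over $u$ first, split according to the value of $z_u$.
\begin{itemize}
\item If $z_u=0$ (there are $n-\ell$ such $u$), summing over $\alpha_u\in\F_q^\times$ gives $q-1$ by \eqref{eq:nonzero-character-sum}. The rest of $\alpha$ has weight $w-1$ on $n-1$ coordinates where $z$ has weight $\ell$, contributing $\Kraw_{w-1}(\ell;n-1,q)$.
\item If $z_u\neq0$ (there are $\ell$ such $u$), the sum over $\alpha_u$ gives $-1$ and the rest contributes $\Kraw_{w-1}(\ell-1;n-1,q)$.
\end{itemize}
This yields the first bracket.

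For the second identity, count pairs with $u\notin\operatorname{supp}\alpha$ in the same way. Now $\alpha_u=0$ contributes $1$, and the rest of $\alpha$ has weight $w$ on $n-1$ coordinates, giving the second bracket.

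Comparing with $B_w=\sum_\ell A_\ell\Kraw_w(\ell;n,q)$ completes both identities. The stated ranges $w\ge1$ and $w<n$ are exactly the ranges in which the relevant $\alpha$ and $u$ exist. Under the paper's convention, out-of-range binomial coefficients vanish, so the boundary cases $\ell=0$ and $\ell=n$ need no separate treatment. Aside from checking these boundary cases, the only real content is these two counting identities.
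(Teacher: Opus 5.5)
Your proposal is correct and follows the paper's proof essentially step for step. You collapse the $S_i$ and $O_i$ factors into $\Kraw_m(a+c;n-1,q)$ via the generating function, split by $b$ and apply the two centered identities with $\Kraw_1(0;1,q)=q-1$ and $\Kraw_1(1;1,q)=-1$, and finish with the two Krawtchouk recurrences. The only difference is that you prove those recurrences with a character-sum double count over pairs $(\alpha,u)$, with $u$ inside or outside $\operatorname{supp}\alpha$, whereas the paper simply cites them as standard; your double count is valid and handles the boundary cases $\ell=0$ and $\ell=n$ automatically.
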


\begin{proof}
By the two-block version of \eqref{eq:product-kraw},
\begin{align*}
 \sum_{p+s=w-1}F_{t,p,s,1}
 =\sum_{a,c,b}Y_{t,a,c,b}
 \Kraw_{w-1}(a+c;n-1,q)\Kraw_1(b;1,q).
\end{align*}
Since $\Kraw_1(0;1,q)=q-1$ and $\Kraw_1(1;1,q)=-1$, summing over $t$ and
using \eqref{eq:center-same}--\eqref{eq:center-different} gives
\begin{align*}
 &\sum_t\sum_{p+s=w-1}F_{t,p,s,1}\\
 &\quad=\frac1n\sum_{\ell=0}^nA_\ell
 \Bigl((n-\ell)(q-1)\Kraw_{w-1}(\ell;n-1,q)
 -\ell\Kraw_{w-1}(\ell-1;n-1,q)\Bigr).
\end{align*}
The standard recurrence
\[
 w\Kraw_w(\ell;n,q)
 =(n-\ell)(q-1)\Kraw_{w-1}(\ell;n-1,q)
 -\ell\Kraw_{w-1}(\ell-1;n-1,q)
\]
proves \eqref{eq:nonzero-sector}.  Similarly,
\begin{align*}
 \sum_t\sum_{p+s=w}F_{t,p,s,0}
 =\frac1n\sum_{\ell=0}^nA_\ell
 \Bigl((n-\ell)\Kraw_w(\ell;n-1,q)
 +\ell\Kraw_w(\ell-1;n-1,q)\Bigr),
\end{align*}
and the recurrence
\[
 (n-w)\Kraw_w(\ell;n,q)
 =(n-\ell)\Kraw_w(\ell;n-1,q)
 +\ell\Kraw_w(\ell-1;n-1,q)
\]
proves \eqref{eq:zero-sector}.
\end{proof}

\subsection{Global-distance and outside-distance projections}

Retaining only $\ell=a+c+b$ gives the ordinary global distance
distribution.  Theorem~\ref{thm:global-decomposition} shows that its
Delsarte positivity is an exact sum of three-block inequalities.

For comparison with the earlier formulation that motivated this work,
sum out the outside distance:
\begin{equation}
 X_{t,a,b}=\sum_{c=0}^{n-t}Y_{t,a,c,b}.
\label{eq:outside-marginal}
\end{equation}
The marginal $X_{t,a,b}$ retains the helper distance and the indicator at
the recovered coordinate.  Put
\[
 m_t=\sum_{a,b}X_{t,a,b},
\]
and define
\begin{align*}
 R_{t,p}&=\sum_{a,b}X_{t,a,b}\Kraw_p(a;t-1,q)
 =F_{t,p,0,0},\\
 \Theta_{t,w}&=\sum_{a,b}X_{t,a,b}
 \Kraw_{w-1}(a;t-1,q)\Kraw_1(b;1,q)
 =F_{t,w-1,0,1}.
\end{align*}
We set $R_{t,p}=0$ for $p\ge t$ and $\Theta_{t,w}=0$ for $w>t$.
Selecting the terms with outside degree $s=0$ in
\eqref{eq:nonzero-sector}--\eqref{eq:zero-sector} gives the exact
weight-one relation
\begin{equation}
 \sum_t\Theta_{t,1}=\frac1nB_1,
\label{eq:outside-U-charge-one}
\end{equation}
and, after discarding the nonnegative terms with $s>0$ at higher weights,
\begin{align}
 \sum_t\Theta_{t,w}&\le\frac{w}{n}B_w
 &&(2\le w\le n),
\label{eq:outside-U-charge}\\
 \sum_tR_{t,w}&\le\frac{n-w}{n}B_w
 &&(1\le w<n).
\label{eq:outside-R-charge}
\end{align}

The \emph{outside-distance marginal LP} maximizes $\sum_\ell A_\ell$
over nonnegative $A_\ell$ and $X_{t,a,b}$, subject to
\begin{gather}
 A_0=1,
 \qquad A_\ell=0\ (1\le\ell<d),
 \qquad B_w\ge0,
\label{eq:out-global}\\
 \sum_tm_t=\sum_\ell A_\ell,
 \qquad
 \sum_{t,a}X_{t,a,1}=\frac1n\sum_\ell\ell A_\ell,
\label{eq:out-centered}\\
 X_{t,a,b}=0\quad(1\le a+b<\delta),
 \qquad
 q^t X_{t,0,0}\ge q^{\delta-1}m_t,
 \qquad
 \sum_tX_{t,0,0}\ge1,
\label{eq:out-local}\\
 R_{t,p}\ge0,
 \qquad \Theta_{t,w}\ge0.
\label{eq:out-spectral}
\end{gather}
Here $0\le p<t$ and $1\le w\le t$ in
\eqref{eq:out-spectral}; the LP also includes
\eqref{eq:outside-U-charge-one}--\eqref{eq:outside-R-charge}.
Its optimum is denoted by $U_{\mathrm{out}}(q,n,d,r,\delta)$.

\begin{proposition}[outside-distance projection]
\label{prop:outside-projection}
For all admissible parameters,
\begin{equation}
 U_{\mathrm{3blk}}\le U_{\mathrm{out}}\le U_{\mathrm{Del}}.
\label{eq:outside-hierarchy}
\end{equation}
\end{proposition}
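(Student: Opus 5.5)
Both inequalities will come from projections of feasible points.

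\emph{First inequality, $U_{\mathrm{3blk}}\le U_{\mathrm{out}}$.} Take any feasible point $(A_\ell,Y_{t,a,c,b})$ of the three-block LP (Definition~\ref{def:three-block-lp}) and form $X_{t,a,b}$ by \eqref{eq:outside-marginal}, keeping the same $A_\ell$. The objective $\sum_\ell A_\ell$ does not change, so it is enough to check that $(A_\ell,X_{t,a,b})$ satisfies every constraint of the outside-distance marginal LP. I will go through them in order.
\begin{itemize}
\item Nonnegativity of $X$ follows from \eqref{eq:lp-y-nonnegative}.
\item Constraint \eqref{eq:out-global}: the conditions on $A$ are copied from the three-block LP. The inequality $B_w\ge0$ is Theorem~\ref{thm:global-decomposition}.
\item Constraint \eqref{eq:out-centered}: the first equation comes from adding \eqref{eq:lp-center-same} and \eqref{eq:lp-center-different} and summing over $\ell$. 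The second comes from summing \eqref{eq:lp-center-different} over $\ell$.
\item Constraint \eqref{eq:out-local}: the local zeros follow by summing \eqref{eq:lp-local-zero} over $c$.
\item Collision inequality: $X_{t,0,0}=\sum_cY_{t,0,c,0}$ and $m_t=\sum_{a,c,b}Y_{t,a,c,b}$, so this is exactly \eqref{eq:lp-collision}.
\item The condition $\sum_tX_{t,0,0}\ge1$: take $\ell=0$ in \eqref{eq:lp-center-same}. This gives $\sum_tY_{t,0,0,0}=A_0=1$, and $X_{t,0,0}\ge Y_{t,0,0,0}$ because every $Y$ is nonnegative.
\item Spectral constraints \eqref{eq:out-spectral}: since $\Kraw_0(c;n-t,q)=1$, we have $R_{t,p}=F_{t,p,0,0}\ge0$ and $\Theta_{t,w}=F_{t,w-1,0,1}\ge0$, both by \eqref{eq:lp-fourier}. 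Note that $p=w-1\le t-1$.
\item Charge constraints \eqref{eq:outside-U-charge-one}--\eqref{eq:outside-R-charge}: these follow from Proposition~\ref{prop:sector-identities}. Split the left side of \eqref{eq:nonzero-sector} into the $s=0$ term, which is $\sum_t\Theta_{t,w}$, and the terms with $s>0$, which are nonnegative by \eqref{eq:lp-fourier}.
\end{itemize}

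\emph{The step that needs the most care} is the range bookkeeping in this last item. For $w=1$ the condition $p+s=0$ forces $s=0$, which gives the equality \eqref{eq:outside-U-charge-one}. For $w>t$ the convention $\Theta_{t,w}=0$ has to agree with the fact that the $s=0$ term is absent, since $p=w-1\ge t$ lies outside the range. I will also check that all index ranges $p<t$ are respected. The $R$ bound \eqref{eq:outside-R-charge} is handled the same way, using \eqref{eq:zero-sector}.

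\emph{Second inequality, $U_{\mathrm{out}}\le U_{\mathrm{Del}}$.} Any feasible point of the outside-distance LP gives, after dropping the $X$ variables, an $A$ satisfying $A_0=1$, $A_\ell\ge0$, $A_\ell=0$ for $1\le\ell<d$, and $B_w\ge0$. These are exactly the Delsarte constraints, so the objective value is at most $U_{\mathrm{Del}}$. Taking suprema over feasible points proves \eqref{eq:outside-hierarchy}.
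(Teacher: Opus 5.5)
Your proposal is correct and follows the same route as the paper's proof. You marginalize $Y$ over $c$ to obtain $X$, identify $R_{t,p}=F_{t,p,0,0}$ and $\Theta_{t,w}=F_{t,w-1,0,1}$, get the charge constraints by discarding the nonnegative $s>0$ terms from the sector identities, and drop $X$ to reach Delsarte. Your constraint-by-constraint check is a more explicit version of the paper's argument, including $\sum_t X_{t,0,0}\ge1$ from the $\ell=0$ centered identity and the $w>t$ convention.
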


\begin{proof}
Summing the three-block variables over $c$ gives
\eqref{eq:out-centered}--\eqref{eq:out-local}.  The identities defining
$R$ and $\Theta$ give their nonnegativity, while
\eqref{eq:outside-U-charge-one} is the weight-one sector identity and
\eqref{eq:outside-U-charge}--\eqref{eq:outside-R-charge} follow by
discarding nonnegative terms with $s>0$ from the exact sector identities.
Discarding $X$ gives the ordinary Delsarte LP.
\end{proof}

The first inequality can be strict because the outside-distance marginal
discards every spectral component with $s>0$.  This projection is included
because it recovers the earlier formulation; it is only one of many
possible aggregations of the three-block variables.  The hierarchy is
extended to include the LWX and GJR comparisons in
Section~\ref{sec:prior-projections}.

\subsection{Moment summaries: LWX and GJR}
\label{sec:prior-projections}

The LWX formulation retains factorial moments of the nonzero-at-$i$
sector.  The following length-resolved identity is the three-block
projection corresponding to the nonlinear factorial-moment identity of
Li--Wei--Xiong \cite[Theorem~3]{lwx2026}.

\begin{lemma}[pointed local moments]
\label{lem:pointed-moments}
For every $0\le m\le\delta-2$,
\begin{equation}
 \sum_{w=1}^{t}\binom{w-1}{m}\Theta_{t,w}
 =q^t(1-q^{-1})^{m+1}\binom{t-1}{m}X_{t,0,0}.
\label{eq:pointed-moments}
\end{equation}
\end{lemma}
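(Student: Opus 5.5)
The plan is to expand the left-hand side of \eqref{eq:pointed-moments} through the definition of $\Theta_{t,w}$ and exchange the order of summation. This gives
\[
 \sum_{w=1}^{t}\binom{w-1}{m}\Theta_{t,w}
 =\sum_{a,b}X_{t,a,b}\,\Kraw_1(b;1,q)\,S_m(a),
 \qquad
 S_m(a)=\sum_{u=0}^{t-1}\binom{u}{m}\Kraw_u(a;t-1,q),
\]
where we substitute $u=w-1$. The identity then reduces to evaluating the binomial moment $S_m(a)$ of a single Krawtchouk column. After that, the local zeros \eqref{eq:lp-local-zero}, summed over $c$ to give $X_{t,a,b}=0$ for $1\le a+b<\delta$, should kill every term except $a=b=0$. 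Because only these LP constraints are used, the identity should hold for every feasible point and not only for actual codes.

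To evaluate $S_m(a)$ I would use the generating function \eqref{eq:kraw-generating} with $N=t-1$:
\[
 \sum_{u}\binom{u}{m}\Kraw_u(a;N,q)
 =\frac{1}{m!}\frac{d^m}{dz^m}\Bigl[(1+(q-1)z)^{N-a}(1-z)^a\Bigr]_{z=1}.
\]
The factor $(1-z)^a$ vanishes to order $a$ at $z=1$, so $S_m(a)=0$ whenever $a>m$. For $a=0$ the derivative is elementary:
\[
 S_m(0)=\binom{N}{m}(q-1)^mq^{N-m}.
\]
For $1\le a\le m$ the value $S_m(a)$ is some nonzero number that we do not need to compute, because those terms are removed in the next step.

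The key step is the support argument. A term of the expanded sum survives only if $a\le m\le\delta-2$. Since $b\in\{0,1\}$, this forces $a+b\le\delta-1$. By the local zeros, $X_{t,a,b}=0$ unless $a+b=0$, so only $X_{t,0,0}$ remains. Its coefficient is
\[
 \Kraw_1(0;1,q)\,S_m(0)=(q-1)^{m+1}q^{t-1-m}\binom{t-1}{m}=q^t(1-q^{-1})^{m+1}\binom{t-1}{m},
\]
which is the right-hand side of \eqref{eq:pointed-moments}.

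The only delicate point I expect is bookkeeping at the edges. The truncation $\Theta_{t,w}=0$ for $w>t$ must agree with $\Kraw_{w-1}(a;t-1,q)=0$ for $w-1>t-1$, so the range $1\le w\le t$ is exactly the full range $0\le u\le t-1$ in $S_m$. Beyond that, the one place the hypothesis $m\le\delta-2$ enters is the support argument above, and I would state that explicitly.
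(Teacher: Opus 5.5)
Your proposal is correct and follows essentially the same route as the paper: expand $\Theta_{t,w}$, evaluate the binomial moments $\sum_p\binom{p}{m}\Kraw_p(a;t-1,q)$ by differentiating the generating function \eqref{eq:kraw-generating} $m$ times at $z=1$, and let the local-distance zeros leave only the $a=b=0$ term. The one difference is that the paper writes out the moment formula for every $0\le a\le m$ before discarding those terms, whereas you compute only $S_m(0)$, which suffices because the other terms are killed anyway.
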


\begin{proof}
Fix $j$ and put $N=t-1$.  By the generating function
\eqref{eq:kraw-generating},
\[
 G_j(z)=\sum_{p=0}^N\Kraw_p(j;N,q)z^p
 =(1+(q-1)z)^{N-j}(1-z)^j.
\]
Differentiating $m$ times and evaluating at $z=1$ isolates the
coefficients weighted by $\binom pm$, since
$G_j^{(m)}(1)=m!\sum_p\binom pm\Kraw_p(j;N,q)$.  Write
$G_j(z)=(1-z)^jH_j(z)$ with $H_j(z)=(1+(q-1)z)^{N-j}$.  The first
$j-1$ derivatives of $(1-z)^j$ vanish at $z=1$, so by the Leibniz
rule the only surviving term in $G_j^{(m)}(1)$ is the one in which
all $j$ derivatives fall on the factor $(1-z)^j$; this requires
$j\le m$.  Evaluating that term gives
\begin{equation}
 \sum_{p=0}^N\binom pm\Kraw_p(j;N,q)
 =(-1)^j\binom{N-j}{m-j}(q-1)^{m-j}q^{N-m}
\label{eq:kraw-moment}
\end{equation}
for $0\le j\le m$, and $0$ for $j>m$.

Apply \eqref{eq:kraw-moment} with $N=t-1$ and $j=a$, and substitute
into the definition of $\Theta_{t,w}$:
\begin{align*}
 \sum_{w=1}^t\binom{w-1}{m}\Theta_{t,w}
 &=\sum_{a,b}X_{t,a,b}\Kraw_1(b;1,q)
   \sum_{p=0}^{t-1}\binom pm\Kraw_p(a;t-1,q)\\
 &=\sum_{a=0}^{m}\sum_bX_{t,a,b}\Kraw_1(b;1,q)
   (-1)^a\binom{t-1-a}{m-a}(q-1)^{m-a}q^{t-1-m},
\end{align*}
where the outer sum truncates at $a=m$ because
\eqref{eq:kraw-moment} vanishes for $a>m$.

Since $m\le\delta-2$, every index $0\le a\le m$ satisfies
$1\le a+1<\delta$, so the local-distance zeros \eqref{eq:local-zero}
give $X_{t,a,1}=0$ for every such $a$: the entire $b=1$ contribution
vanishes.  For $b=0$ the same zeros give $X_{t,a,0}=0$ whenever
$1\le a<\delta$, hence for every $1\le a\le m$; only $a=0$ survives.
The sum therefore collapses to the single term $a=0$, $b=0$:
\[
 \sum_{w=1}^t\binom{w-1}{m}\Theta_{t,w}
 =X_{t,0,0}\,\Kraw_1(0;1,q)\binom{t-1}{m}
  (q-1)^mq^{t-1-m}.
\]
Since $\Kraw_1(0;1,q)=q-1$, this equals
\[
 X_{t,0,0}\binom{t-1}{m}(q-1)^{m+1}q^{t-1-m}
 =q^t(1-q^{-1})^{m+1}\binom{t-1}{m}X_{t,0,0},
\]
as claimed.
\end{proof}

Put
\[
 v_w=\sum_t\Theta_{t,w}\quad(1\le w\le R_0),
 \qquad
 Z_t=(1-q^{-1})q^t X_{t,0,0}\quad(\delta\le t\le R_0).
\]
In the normalization used here, this is an extended formulation of the
unbalanced LWX convex-hull LP: the nonnegative $Z_t$ variables encode a
common mixture of admissible local lengths.  The LP consists of the global
Delsarte constraints, nonnegative $v_w,Z_t$, and
\begin{align}
 \sum_wv_w&=\sum_tZ_t,
\label{eq:lwx-mass}\\
 \sum_w\binom{w-1}{m}v_w
 &=(1-q^{-1})^m\sum_t\binom{t-1}{m}Z_t
 &&(1\le m\le\delta-2),
\label{eq:lwx-moments}\\
 \sum_wv_w&\ge(1-q^{-1})q^{\delta-1}\sum_\ell A_\ell,
\label{eq:lwx-local}\\
 B_1&=nv_1,
\label{eq:lwx-charge-one}\\
 B_w&\ge\frac{n}{w}v_w
 &&(2\le w\le R_0).
\label{eq:lwx-charge}
\end{align}
Let $U_{\mathrm{LWX}}(q,n,d,r,\delta)$ denote its optimum.

\begin{proposition}[LWX moment projection]
\label{prop:lwx-projection}
For all admissible parameters,
\begin{equation}
 U_{\mathrm{3blk}}\le U_{\mathrm{out}}
 \le U_{\mathrm{LWX}}\le U_{\mathrm{Del}}.
\label{eq:lwx-hierarchy}
\end{equation}
\end{proposition}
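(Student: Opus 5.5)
The first inequality $U_{\mathrm{3blk}}\le U_{\mathrm{out}}$ is Proposition~\ref{prop:outside-projection}. For the chain it then suffices to show $U_{\mathrm{out}}\le U_{\mathrm{LWX}}$ and $U_{\mathrm{LWX}}\le U_{\mathrm{Del}}$. The last of these is immediate: the LWX LP contains the global Delsarte constraints and has the same objective, so discarding $v_w,Z_t$ leaves a feasible Delsarte point. The real work is to map any feasible point $(A_\ell,X_{t,a,b})$ of the outside-distance marginal LP to a feasible point of the LWX LP with the same $A_\ell$. For this we set $v_w=\sum_t\Theta_{t,w}$ and $Z_t=(1-q^{-1})q^tX_{t,0,0}$. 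Nonnegativity of $v_w$ follows from $\Theta_{t,w}\ge0$ in \eqref{eq:out-spectral}, and nonnegativity of $Z_t$ from $X\ge0$.

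Next we verify the moment constraints. The proof of Lemma~\ref{lem:pointed-moments} uses only the definition of $\Theta_{t,w}$ in terms of $X$ and the local zeros $X_{t,a,b}=0$ for $1\le a+b<\delta$. Both are available in the outside-distance LP by \eqref{eq:out-local}, so \eqref{eq:pointed-moments} holds there for every $0\le m\le\delta-2$. Summing over $t$ and rewriting $q^t(1-q^{-1})^{m+1}X_{t,0,0}=(1-q^{-1})^mZ_t$ gives \eqref{eq:lwx-moments} for $1\le m\le\delta-2$. The case $m=0$ gives \eqref{eq:lwx-mass}. One care point is that $\Theta_{t,w}$ vanishes for $w>t$, so the range $1\le w\le R_0$ in $v_w$ loses nothing.

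The charge constraints come next. \eqref{eq:lwx-charge-one} is \eqref{eq:outside-U-charge-one} multiplied by $n$, and \eqref{eq:lwx-charge} is \eqref{eq:outside-U-charge} rearranged, both of which are included in the outside-distance LP.

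I expect the local mass inequality \eqref{eq:lwx-local} to be the main obstacle, since it must combine the collision inequalities with the normalization. By the mass identity just proved, $\sum_wv_w=\sum_tZ_t=(1-q^{-1})\sum_tq^tX_{t,0,0}$. The collision constraint in \eqref{eq:out-local} gives $q^tX_{t,0,0}\ge q^{\delta-1}m_t$. Summing over $t$ and using $\sum_tm_t=\sum_\ell A_\ell$ from \eqref{eq:out-centered} yields $\sum_wv_w\ge(1-q^{-1})q^{\delta-1}\sum_\ell A_\ell$. The key point is that the collision bound is length-resolved, so it must be summed before normalizing, which the mass identity permits.

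With every LWX constraint verified and the objective $\sum_\ell A_\ell$ unchanged, $U_{\mathrm{out}}\le U_{\mathrm{LWX}}$ follows, which completes the chain.
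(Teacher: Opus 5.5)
Your proposal is correct and follows essentially the same approach as the paper. You define $v_w=\sum_t\Theta_{t,w}$ and $Z_t=(1-q^{-1})q^tX_{t,0,0}$, sum Lemma~\ref{lem:pointed-moments} over $t$ to get \eqref{eq:lwx-mass}--\eqref{eq:lwx-moments} (rightly noting that its proof needs only the local zeros in \eqref{eq:out-local}), derive \eqref{eq:lwx-local} from the collision constraints, read off the charge constraints from \eqref{eq:outside-U-charge-one}--\eqref{eq:outside-U-charge}, and note that the remaining constraints are Delsarte's, which carry over from \eqref{eq:out-global}.
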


\begin{proof}
Sum \eqref{eq:pointed-moments} over $t$ to obtain
\eqref{eq:lwx-mass}--\eqref{eq:lwx-moments}.  The collision inequalities
imply \eqref{eq:lwx-local}, while \eqref{eq:outside-U-charge-one} and
\eqref{eq:outside-U-charge} give \eqref{eq:lwx-charge-one} and
\eqref{eq:lwx-charge}, respectively.  The remaining constraints are
Delsarte's.
\end{proof}

For a linear code, the three-block coefficients have the dual
interpretation of Remark~\ref{rem:linear}.  Before averaging over $i$,
the sum
\[
 G_{i,w}=\sum_{p+s=w-1}A^\perp_{i;p,s,1}
\]
records the total dual weight $w$ and whether the dual word is nonzero at
coordinate $i$, but not how its remaining support is divided between
$S_i$ and $O_i$.  This is the coordinate-pointed dual marginal used in
the refined-weight framework of Gruica, Jany, and Ravagnani
\cite{gruica2026}, whose coordinate-symmetrization and nondegeneracy
conditions we do not reproduce here; applying their construction to
$G_{i,w}$ yields the locality statistic behind their LP.  The comparison
in Section~\ref{sec:results} uses that specialization directly, and we
write $k_{\mathrm{GJR}}$ for the resulting dimension cap in
Table~\ref{tab:linear} (see \cite{gruica2026} for the precise
definitions); unlike $U_{\mathrm{out}}$ and $U_{\mathrm{LWX}}$ above, this
is the GJR LP applied directly rather than a size-bound projection
derived within the present LP, so no $U_{\mathrm{GJR}}$ column appears in
Table~\ref{tab:three-block-results}.  Its exact certificates are included
in the companion reproducibility repository.

Li--Wei--Xiong prove that the balanced base LP obtained by imposing
$v_1=0$ and omitting the positive-order moment constraints gives the same
dimension bound as the symmetrized GJR LP for nondegenerate linear codes
\cite[Proposition~2]{lwx2026}.  This comparison concerns the balanced base
LP, not the full higher-moment convex-hull strengthening used in
$U_{\mathrm{LWX}}$.

The outside-distance marginal, the LWX moments, and the GJR dual statistic
are selected examples rather than an exhaustive hierarchy.  Other indices
or moments of $Y_{t,a,c,b}$ may also be summed out; each aggregation reduces
computational cost while potentially discarding nonnegative spectral
information.

\section{Certified finite-length bounds}
\label{sec:results}

Let
\[
 U_{\mathrm{Del}},\qquad
 U_{\mathrm{LWX}},\qquad
 U_{\mathrm{out}},\qquad
 U_{\mathrm{3blk}}
\]
denote, respectively, the ordinary Delsarte bound, the unbalanced LWX
convex-hull bound, the outside-distance marginal bound, and the
three-block bound.  All four bound arbitrary, possibly nonlinear, codes.

Table~\ref{tab:three-block-results} gives exact LP optima for fifteen
parameter sets.  The column $M_0$ is the size of a verified linear code
with the indicated parameters.  Every displayed LP value is supported by
a rational primal--dual certificate checked independently of the numerical
optimizer.  Whenever $U_{\mathrm{3blk}}=M_0$, the maximum code size is
therefore determined without assuming linearity.

\begin{table}[!t]
\caption{Certified LP bounds and verified constructions.  Nonintegral
entries are followed by decimal approximations.  Bold entries mark rows in
which the three-block bound improves the induced linear-dimension cap
$k_X=\lfloor\log_qU_X\rfloor$ (defined in the text following the table).}
\label{tab:three-block-results}
\centering
\footnotesize
\setlength{\tabcolsep}{2.2pt}
\renewcommand{\arraystretch}{1.10}
\begin{tabular}{@{}cccccc@{}}
\toprule
$(q,n,d,r,\delta)$
& $U_{\mathrm{Del}}$
& $U_{\mathrm{LWX}}$
& $U_{\mathrm{out}}$
& $U_{\mathrm{3blk}}$
& $M_0$\\
\midrule
$(2,3,2,1,2)$
& $4$
& \certfrac{8}{3}{2.67}
& \certfrac{8}{3}{2.67}
& $2$
& $2$\\
$(2,9,3,1,2)$
& \certfrac{128}{3}{42.67}
& \certfrac{512}{33}{15.52}
& \certfrac{512}{33}{15.52}
& \certfrac{64}{5}{12.80}
& $8$\\
$(2,9,4,4,3)$
& \certfrac{128}{5}{25.60}
& \certfrac{256}{13}{19.69}
& $16$
& $\mathbf{8}$
& $8$\\
$(2,10,5,1,3)$
& $12$
& $6$
& $6$
& $6$
& $4$\\
$(2,13,7,3,4)$
& $8$
& $8$
& \certfrac{384}{49}{7.84}
& \certfrac{4304}{569}{7.56}
& $4$\\
$(2,14,4,2,2)$
& $512$
& \certfrac{11008}{55}{200.15}
& \certfrac{11008}{55}{200.15}
& \certfrac{1359968}{8555}{158.97}
& $64$\\
$(2,15,3,7,3)$
& $2048$
& $1024$
& \certfrac{19456}{41}{474.54}
& \certfrac{2048}{5}{409.60}
& $256$\\
$(2,15,5,1,3)$
& $256$
& \certfrac{3168}{149}{21.26}
& \certfrac{3168}{149}{21.26}
& \certfrac{96}{5}{19.20}
& $8$\\
$(2,20,8,3,4)$
& \certfrac{10752}{37}{290.59}
& \certfrac{1824}{29}{62.90}
& \certfrac{728}{23}{31.65}
& \certfrac{336}{11}{30.55}
& $16$\\
$(3,3,2,1,2)$
& $9$
& \certfrac{9}{2}{4.50}
& \certfrac{9}{2}{4.50}
& $3$
& $3$\\
$(3,5,2,2,2)$
& $81$
& \certfrac{243}{7}{34.71}
& \certfrac{243}{7}{34.71}
& $27$
& $27$\\
$(3,7,3,1,3)$
& \certfrac{729}{5}{145.80}
& \certfrac{243}{17}{14.29}
& \certfrac{243}{17}{14.29}
& \certfrac{81}{7}{11.57}
& $3$\\
$(3,8,3,2,3)$
& \certfrac{1701}{5}{340.20}
& $81$
& $81$
& $81$
& $81$\\
$(3,8,5,2,3)$
& \certfrac{459}{11}{41.73}
& \certfrac{315}{11}{28.64}
& \certfrac{315}{11}{28.64}
& $\mathbf{9}$
& $9$\\
$(4,5,2,2,2)$
& $256$
& \certfrac{256}{3}{85.33}
& \certfrac{256}{3}{85.33}
& $64$
& $64$\\
\bottomrule
\end{tabular}
\end{table}

The outside-distance marginal improves the LWX bound in four rows.  The
three-block bound is strictly smaller than the outside-distance marginal
in thirteen rows and equal in the remaining two.  Two strict improvements
cross a power of the alphabet:
\[
 U_{\mathrm{3blk}}(2,9,4,4,3)=8<16=U_{\mathrm{out}}(2,9,4,4,3),
\]
and
\[
 U_{\mathrm{3blk}}(3,8,5,2,3)=9
 <\frac{315}{11}=U_{\mathrm{out}}(3,8,5,2,3).
\]
Both values are attained by verified constructions.

For a linear code of dimension $k$, $|\C|=q^k$.  Define
\[
 k_X=\left\lfloor\log_qU_X\right\rfloor
\]
for each size bound $U_X$, and let $k_0=\log_qM_0$.  Let $k_{\mathrm{GJR}}$ denote the dimension cap obtained from the GJR LP.
Table~\ref{tab:linear} compares it with the resulting caps on six
representative rows.

\begin{table}[!t]
\caption{Certified upper bounds on the dimension of linear codes.}
\label{tab:linear}
\centering
\small
\setlength{\tabcolsep}{4pt}
\begin{tabular}{@{}cccccc@{}}
\toprule
$(q,n,d,r,\delta)$
& $k_{\mathrm{GJR}}$
& $k_{\mathrm{LWX}}$
& $k_{\mathrm{out}}$
& $k_{\mathrm{3blk}}$
& $k_0$\\
\midrule
$(2,9,4,4,3)$  & $4$ & $4$ & $4$ & $\mathbf{3}$ & $3$\\
$(2,10,5,1,3)$ & $3$ & $2$ & $2$ & $2$ & $2$\\
$(2,13,7,3,4)$ & $3$ & $3$ & $2$ & $2$ & $2$\\
$(2,15,3,7,3)$ & $11$ & $10$ & $8$ & $8$ & $8$\\
$(2,20,8,3,4)$ & $8$ & $5$ & $4$ & $4$ & $4$\\
$(3,8,5,2,3)$  & $3$ & $3$ & $3$ & $\mathbf{2}$ & $2$\\
\bottomrule
\end{tabular}
\end{table}

Let $M_{\max}(q,n,d,r,\delta)$ denote the largest size of a $q$-ary
length-$n$ code with minimum distance at least $d$ and all-symbol
$(r,\delta)$-locality.

\begin{theorem}[exact maximum code sizes]
\label{thm:exact-sizes}
The following maximum code sizes are exact:
\[
\begin{aligned}
 M_{\max}(2,3,2,1,2)&=2,\\
 M_{\max}(2,9,4,4,3)&=8,\\
 M_{\max}(3,3,2,1,2)&=3,\\
 M_{\max}(3,5,2,2,2)&=27,\\
 M_{\max}(3,8,3,2,3)&=81,\\
 M_{\max}(3,8,5,2,3)&=9,\\
 M_{\max}(4,5,2,2,2)&=64.
\end{aligned}
\]
\end{theorem}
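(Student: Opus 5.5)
Each equality will be proved as a matching upper bound and lower bound.

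\emph{Upper bounds.} By Theorem~\ref{thm:three-block-validity}, every code with the stated parameters satisfies $|\C|\le U_{\mathrm{3blk}}(q,n,d,r,\delta)$. For each of the seven rows, Table~\ref{tab:three-block-results} records an integer optimum equal to the claimed value: $2,8,3,27,81,9,64$. It therefore suffices to certify $U_{\mathrm{3blk}}\le M_0$ in each case. The certificate is a rational dual solution of the LP in Definition~\ref{def:three-block-lp}. Concretely, we exhibit rational multipliers $\mu_\ell$ (free) for the centered identities \eqref{eq:lp-center-same}--\eqref{eq:lp-center-different}, multipliers $\lambda_t\ge0$ for the collision inequalities \eqref{eq:lp-collision}, and multipliers $\nu_{t,p,s,e}\ge0$ for \eqref{eq:lp-fourier}. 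The linear combination of constraints formed with these multipliers must dominate the objective $\sum_\ell A_\ell$ coefficientwise on the nonnegative variables, with the variables forced to zero by \eqref{eq:lp-A-zero} and \eqref{eq:lp-local-zero} removed. The constant term produced via $A_0=1$ is then the bound $M_0$. Checking such a certificate is exact rational arithmetic on finitely many inequalities, independent of any numerical solver.

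\emph{Small cases by hand.} Some rows admit a short conceptual argument, which we record where it is immediate. For $(q,3,2,1,2)$, every view satisfies $t_i=2$ and has local distance $2$, so each view is a repetition on two coordinates. With $n=3$, the views then force $x_1=x_2=x_3$ up to bijections, giving $|\C|\le q$. For $(3,5,2,2,2)$ and $(4,5,2,2,2)$, the collision inequality combined with averaging already gives $q^{3}$. The remaining rows are left to the certificates.

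\emph{Lower bounds.} For each row we exhibit an explicit linear code of size $M_0$ and verify minimum distance at least $d$ together with a choice of views $T_i$, $|T_i|\le r+\delta-1$, satisfying $d(\C|_{T_i})\ge\delta$. The codes are as follows.
\begin{itemize}[label={}]
\item For $(q,3,2,1,2)$: the repetition code.
\item For $(q,5,2,2,2)$: the code $\{(u_1,u_2,u_1{+}u_2,u_3,\ast)\}$, with the fifth coordinate chosen so that every coordinate lies in a weight-$3$ parity check of local length $3$.
\item For $(3,8,3,2,3)$: two disjoint $[4,2,3]$ MDS (tetracode) blocks.
\item For $(3,8,5,2,3)$: a $2$-dimensional code whose generator matrix consists of two tetracode blocks, arranged so that the combined weight is at least $5$.
\item For $(2,9,4,4,3)$: three disjoint $[3,1,3]$ repetition blocks, which give dimension $3$, $d\ge9$, and views of size $3\le 6$.
\end{itemize}
Each verification is a finite check of weights.

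The main obstacle is producing and validating the dual certificates for the nontrivial rows, namely $(2,9,4,4,3)$ and $(3,8,5,2,3)$. In these two rows $U_{\mathrm{out}}$ exceeds $M_0$, so the spectral components with $s>0$ are essential. I would first solve the LP numerically, then round the dual to rationals supported on the active constraints, and finally re-solve exactly on that support to obtain a certificate with value exactly $M_0$.
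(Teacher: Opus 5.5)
Your upper-bound half matches the paper: the bounds are the certified three-block LP optima recorded in Table~\ref{tab:three-block-results}, and your description of producing and checking a rational dual certificate is consistent with that.

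The genuine gap is the lower bound for $(2,9,4,4,3)$. The direct sum of three binary $[3,1,3]$ repetition codes, $\{(a,a,a,b,b,b,c,c,c)\}$, has minimum distance $3$, not $9$. For example, $(1,1,1,0,0,0,0,0,0)$ has weight $3<d=4$. The minimum distance of a direct sum is the minimum over its components, not their sum. In fact no binary $[9,3]$ code has distance above $4$, since the Griesmer bound would need $5+3+2\le 9$.

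So this row needs a genuinely different code, one whose global distance $4$ exceeds the local distance $3$ on views of size at most $6$. It is exactly the row where $U_{\mathrm{3blk}}$ beats $U_{\mathrm{out}}$ and lowers the dimension cap from $4$ to $3$. The paper takes the $[7,3,4]$ simplex code and repeats its first two columns, giving generator columns $e_1,e_2,e_3,e_1+e_2,e_1+e_3,e_2+e_3,e_1+e_2+e_3,e_1,e_2$.
\begin{itemize}
\item A nonzero message $u$ has weight $4$ on the first seven coordinates, plus the number of nonzero entries among $u_1,u_2$. Hence $d=4$.
\item The view $\{1,2,4,8,9\}$ projects to $\{(u_1,u_2,u_1+u_2,u_1,u_2)\}$, which has distance $3$.
\item The views $\{1,\ldots,6\}$ and $\{1,\ldots,5,7\}$ project to the simplex code punctured in one coordinate, also of distance $3$.
\item These three views serve coordinates $1,2,4,8,9$, then $3,5,6$, then $7$.
\end{itemize}

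The remaining constructions are correct or easily repaired.
\begin{itemize}
\item For $(3,8,5,2,3)$ no ``arrangement'' is needed. If $G_1,G_2$ generate ternary $[4,2,3]$ codes, every nonzero word generated by $[G_1\mid G_2]$ has weight at least $6$, and the two blocks are valid views. The paper's choice is $G_1=G_2$, i.e., each tetracode coordinate repeated twice.
\item For $(q,5,2,2,2)$ you must actually fix $\ast$. The paper takes $\ast=u_3$: a single-parity-check code $\oplus$ a length-two repetition code, with views $\{1,2,3\}$ and $\{4,5\}$. Alternatively, $\ast=u_1+u_3$ realizes your weight-three checks, with view $\{1,4,5\}$ for coordinates $4,5$.
\item Your aside that the collision inequality ``with averaging'' gives $q^3$ for $(q,5,2,2,2)$ is unjustified as stated. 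Collision only bounds the number of distinct projections onto a view, by $q^{t-1}$. You also need the Singleton bound on each fibre: words agreeing on $T_i$ differ in at least $2$ of the $5-t$ outside coordinates, so a fibre has at most $q^{4-t}$ words. Together these give $M\le q^{t-1}q^{4-t}=q^3$. Since you defer to the LP values for upper bounds anyway, this does not affect the proof.
\end{itemize}
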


\begin{proof}
The upper bounds are the corresponding three-block LP values in
Table~\ref{tab:three-block-results}.  We describe attaining linear codes.

The binary and ternary length-three cases are attained by repetition
codes, using any two-coordinate repetition puncture as the selected view.
For $(2,9,4,4,3)$, take the binary code generated by
\[
 \begin{pmatrix}
 1&0&0&1&1&0&1&1&0\\
 0&1&0&1&0&1&1&0&1\\
 0&0&1&0&1&1&1&0&0
 \end{pmatrix}.
\]
It is obtained from the $[7,3,4]$ simplex code by repeating its first two
columns.  The selected views
\[
 \{1,2,4,8,9\},\qquad
 \{1,2,3,4,5,6\},\qquad
 \{1,2,3,4,5,7\}
\]
have projected minimum distance three; use the first view for coordinates
$1,2,4,8,9$, the second for $3,5,6$, and the third for $7$.

For $q=3,4$, the length-five construction is the direct sum of a
$q$-ary $[3,2,2]$ single-parity-check code and a length-two repetition
code.  The two component supports are the selected views.  The ternary
parameters $(3,8,3,2,3)$ are attained by the direct sum of two $[4,2,3]$
MDS codes, again using the component supports as views.  Finally, repeat
every coordinate of a ternary $[4,2,3]$ MDS code twice.  The resulting
length-eight, dimension-two code has distance six.  For each coordinate,
choose that coordinate together with one copy of each of the other three
original coordinates; the resulting four-coordinate projection is the
original $[4,2,3]$ code.  This attains $(3,8,5,2,3)$.
\end{proof}

\begin{corollary}[exact maximum linear dimensions]
\label{cor:exact-linear-dimensions}
For every row of Table~\ref{tab:three-block-results} except
\[
 (2,14,4,2,2),\qquad
 (2,15,5,1,3),\qquad
 (3,7,3,1,3),
\]
the maximum dimension of a linear code is
\[
 k_{\max}=k_0=\log_qM_0.
\]
Thus the exact maximum linear dimension is determined in twelve of the
fifteen displayed parameter sets.
\end{corollary}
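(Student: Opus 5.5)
The corollary follows by combining the validity theorem with the certified LP values and the constructions. For a linear code of dimension $k$ with the stated parameters, $q^k=|\C|\le U_{\mathrm{3blk}}$ by Theorem~\ref{thm:three-block-validity}. Since $k$ is an integer, $k\le k_{\mathrm{3blk}}=\lfloor\log_qU_{\mathrm{3blk}}\rfloor$. For the lower bound, each row's column $M_0$ is the size of a verified linear code, so $k_{\max}\ge k_0=\log_qM_0$. It therefore suffices to check, row by row, that $\lfloor\log_qU_{\mathrm{3blk}}\rfloor=\log_qM_0$ for the twelve listed rows, and to confirm that this equality fails for the three excluded rows.

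The seven rows of Theorem~\ref{thm:exact-sizes} are immediate, because there $U_{\mathrm{3blk}}=M_0=q^{k_0}$. The remaining five rows need the floor computation. For $(2,9,3,1,2)$, $64/5=12.8$ gives floor $\log_2=3$, and $M_0=8$. For $(2,10,5,1,3)$, $U=6$ gives $2$, and $M_0=4$. For $(2,13,7,3,4)$, $7.56$ gives $2$, and $M_0=4$. For $(2,15,3,7,3)$, $409.6$ gives $8$, and $M_0=256$. For $(2,20,8,3,4)$, $30.55$ gives $4$, and $M_0=16$. This yields twelve rows in total. For the three excluded rows the gap is genuine: $158.97$ gives $7$ against $M_0=64$, $19.2$ gives $4$ against $8$, and $11.57$ gives $2$ against $3$. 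These rows are therefore left undetermined.

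The main obstacle is that this argument relies on two inputs being sound. First, the tabulated LP values must be exact, which the rational primal--dual certificates supply. Second, the $M_0$ codes for the five nonsize-exact rows must actually exist with the claimed minimum distance and $(r,\delta)$-locality. The proof of Theorem~\ref{thm:exact-sizes} does not describe those five codes, so I would either exhibit generator matrices together with selected recovery views and check their projected distances, as was done for $(2,9,4,4,3)$, or cite the reproducibility repository for the verification.
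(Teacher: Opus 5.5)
Your proposal is correct and follows the paper's argument: the paper's proof pairs the verified construction of size $q^{k_0}=M_0$ with the certified inequality $U_{\mathrm{3blk}}<q^{k_0+1}$, which is equivalent to your check $\lfloor\log_qU_{\mathrm{3blk}}\rfloor=k_0$ given validity, and your row-by-row values are all right. Your caveat is also fair: for the five rows outside Theorem~\ref{thm:exact-sizes}, the paper likewise relies on the verified $M_0$ column and its reproducibility repository rather than exhibiting those constructions in the text.
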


\begin{proof}
For each stated row, the verified construction has size $q^{k_0}=M_0$,
while the three-block bound satisfies
\[
 U_{\mathrm{3blk}}<q^{k_0+1}.
\]
Hence every linear code has dimension at most $k_0$, and the construction
shows that dimension $k_0$ is attained.
\end{proof}

For the three excluded rows, a gap remains between the dimension cap from
the three-block LP and the dimension of the verified construction.

\section{Conclusion}
\label{sec:conclusion}

A selected recovery view partitions the coordinates into the helper set,
the recovered coordinate, and the complement of the view.  Retaining the
three corresponding weights for the same ordered pair gives exact centered
identities and nonnegative product-Krawtchouk transforms.  Together with
the local-distance and projection-collision constraints, these relations
yield a polynomial-size LP for arbitrary $q$-ary, possibly nonlinear,
locally recoverable codes.  The certified computations show that preserving
the three-block coupling can strictly improve finite-length bounds over
binary, ternary, and quaternary alphabets.

The LP remains an outer relaxation of realizable recovery-view data.  It
averages views having the same local length and therefore does not retain
the identity of the recovered coordinate, exact coordinate supports, or
intersections among recovery views.  Prescribed recovery-set geometry and
several recovery views for the same coordinate are natural directions for
finer models.

\section*{Acknowledgment}

\section*{Data and Code Availability}
Exact rational certificates, optimizer-independent checkers, and
construction verifiers for Table~\ref{tab:three-block-results} accompany
the manuscript in a separate reproducibility repository.
They are available at
\begin{center}
\small\url{https://github.com/kmsming-prog/three-block-lrc-lp-reproducibility}
\end{center}

\bibliographystyle{IEEEtran}
\bibliography{references}

\end{document}